\documentclass{article}
\usepackage{graphicx} 
\usepackage[utf8]{inputenc}
\usepackage{amsmath}
\usepackage{amsthm}
\usepackage{setspace}
\usepackage{bbm}
\usepackage{float}
\usepackage{subcaption}
\usepackage{algorithm}
\usepackage{algorithmic}
\usepackage[style=authoryear, hyperref=true, giveninits=true, uniquename=false, maxcitenames=2]{biblatex}
\usepackage[justification=centering]{caption}

\usepackage{array,tabularx,multirow,makecell}
\usepackage{tikz}
\usepackage[top=2cm,bottom=2cm,right=3cm,left=3cm]{geometry}
\usepackage{amssymb}

\theoremstyle{definition}

\newtheorem*{remark}{Remark}

\newtheorem{proposition}{Proposition}

\newcommand{\INPUT}{\item[\textbf{Input:}]}
\newcommand{\OUTPUT}{\item[\textbf{Output:}]}

\newcommand{\qbeta}[3]{Q_{\mathrm{Beta}(#1,#2)}\!\left(#3\right)}

\usepackage{xcolor}
\usepackage[colorlinks=true,
            citecolor=blue,
            linkcolor=blue,
            urlcolor=blue]{hyperref}

\AtBeginBibliography{%
}

\DeclareFieldFormat{citehyperref}{%
  \DeclareFieldAlias{bibhyperref}{noformat}%
  \bibhyperref{#1}}

\DeclareFieldFormat{textcitehyperref}{%
  \DeclareFieldAlias{bibhyperref}{noformat}%
  \bibhyperref{#1}}

\savebibmacro{cite}
\renewbibmacro*{cite}{%
  \printtext[citehyperref]{\restorebibmacro{cite}\usebibmacro{cite}}}

\savebibmacro{textcite}
\renewbibmacro*{textcite}{%
  \printtext[textcitehyperref]{\restorebibmacro{textcite}\usebibmacro{textcite}}}

\title{Beta-trees for testing multivariate goodness-of-fit and localizing deviations from a model 

}
\author{Valérie N. P. Ho\\Stanford University
 \and
 Guenther Walther\thanks{Research supported by NSF grant DMS-2413885.}  \\Stanford University
}
\date{}

\begin{document}

\maketitle

\begin{abstract}
   We introduce a novel goodness-of-fit (GOF) procedure based on Beta-tree partitions. A Beta-tree produces a data-adaptive partition of the sample space into regions  and provides guaranteed finite sample confidence intervals for the probability contents of each region. The proposed test assesses whether the probabilities assigned by a null distribution $F_0$ fall within these intervals, thereby quantifying agreement between the model and the data. A key application is the selection of the number of components in a mixture model, where the null distribution is constructed via $k$-means clustering. In contrast to classical global GOF tests such as Kolmogorov–Smirnov or Anderson–Darling, which quantify the discrepancy through a single global statistic, our method is designed to detect local departures from the null and to identify regions of model misspecification. We demonstrate the efficiency of our test in detecting departures from the null on some simulated and real datasets.
\end{abstract}
\section{Introduction}
Assessing the goodness-of-fit (GOF) of a statistical model is a fundamental problem in statistics, with wide-ranging applications in model selection, density estimation, and hypothesis testing. Given $X_1,\dots,X_n$ i.i.d. observations of a $d$-dimensional random variable $X$, GOF tests aim to evaluate whether this set of observed data is consistent with a specified null distribution $F_0$:
$$\mathcal{H}_0: X\sim F_0 $$
In the univariate setting, prominent examples include the Kolmogorov–Smirnov (\textcite{kolmogorov1933empirical}, \textcite{smirnov1948table}) and Anderson--Darling tests (\textcite{anderson1954test}), as well as the Cramér--von Mises (\textcite{cramer1928composition}, \textcite{vonmises1928wahrscheinlichkeit}), Berk--Jones (\textcite{berk1979goodness}), higher criticism tests (\textcite{donoho2004higher}) and tests based on phi-divergences (\textcite{jager2007goodness}), which rely on global metrics of discrepancy between the empirical distribution function and $F_0$. Such tests only indicate whether the model fits the data overall but provide limited insight into where the model fails.

In many practical scenarios, it is desirable to not only detect departures from the null but also to localize the regions of the sample space responsible for the deviation. Tests that provide such localized information can offer deeper insights into model misspecification, guide model refinement, and improve interpretability. This motivates the development of \textit{local} goodness-of-fit tests.  One example of such an approach via Bayesian inference is sketched in \textcite{ma2011coupling}, where Optional Pólya trees (OPT) are used to construct a partition of the sample space together with posterior probabilities for each rectangle in the partition, which describe the posterior probability that the data distribution coincides with the hypothesized distribution on that rectangle.  The authors report that the computational effort becomes prohibitive in higher dimensions (\textcite{wong2010optional}, \textcite{jiang2016computational}). It is also not clear how to handle the multiple inference involving the large number of posterior probabilities, and it appears that the hierarchical stopping in the OPT could potentially cause local deviations to be averaged out, masking fine-scale structure and resulting in premature termination of splits in regions with subtle variation.

In this paper, we introduce a new local goodness-of-fit test with frequentist finite sample guarantees. It is based on Beta-trees, a data-adaptive partition of the Euclidean space introduced in \textcite{walther2023beta}. This data-adaptive partition of the sample space into rectangles allows us to give exact finite sample confidence statements for the probability content of each rectangle, assuming only that the data distribution is continuous. Moreover, the Beta-tree allows for simultaneous inference across these rectangles in an optimal way.  Comparing these confidence intervals to the probabilities given by a candidate null distribution, we can construct a test for goodness-of-fit that is sensitive to local deviations. Beyond testing the null hypothesis, the method also provides a natural mechanism to localize deviations from the null, addressing a key limitation of classical global tests.

This paper is organized as follows: first, we give an overview of Beta-trees and their construction, followed by a description of the GOF test we propose. In Section \ref{sec:mixture}, we show how to employ this GOF test in order to select the number of components in a mixture model. In cases where we cannot analytically evaluate probabilities under the null, we propose in Section \ref{sec:GOF-MC} an approach based on Monte Carlo simulation combined with Clopper-Pearson confidence intervals that account for the uncertainty in the simulation. Finally, we illustrate in Section \ref{sec:examples} the relevance of our GOF test with some examples using simulated and real data. The code used for the numerical examples is available at \url{https://github.com/hoval58/beta-tree-gof}.

\section{Foundations of Beta-trees and formulation of a new GOF test}\label{sec:foundations}
\subsection{Building the $k$-d tree}\label{sec:building}
A Beta-tree is a statistical analogue of the classical $k$-d tree, a widely used data structure in computer science; see \textcite{bentley1975multidimensional,friedman1977algorithm}.
Its general construction is described in \textcite{walther2023beta}. Here we summarize the specific version used for our goodness-of-fit procedure, noting in particular that no subsequent pruning step of the $k$-d tree is required. In essence, a Beta-tree is based on a recursive partitioning of the space using the same splitting mechanism as in $k$-d trees. Thus, each rectangle is divided into two children rectangles according to the empirical distribution of the data, yielding a multiscale representation that adapts to local density features. Then the Beta-tree assigns finite sample guaranteed confidence intervals to each rectangle in the partition. 

Let $X_1,\dots,X_n \in \mathbb{R}^d$. The Beta-tree is built by recursively subdividing the sample space into axis-aligned regions, each subdivision determined by an empirical quantile along one coordinate direction. We begin with the initial region $R_0:=\mathbb{R}^d$. To create the first split, we choose a coordinate $p\in \{1,…,d\}$. We then cut along the $p$-th axis at the empirical median so that the two resulting regions contain approximately equal numbers of observations. Writing $X_{(k),p}$ for the $k$-th order statistic of $\{X_{1p},\dots,X_{np}\}$, the first two children of the root are 
$$R_1:=\{x\in R_0:x_p<X_{(\lceil \frac{n}{2}\rceil),p}\}, \quad R_2:=\{x\in R_0:x_p>X_{(\lceil \frac{n}{2}\rceil),p}\}$$
Then, the same scheme is applied recursively within each resulting region. For a given node $R_k$, we determine the next splitting coordinate by cycling through the axes: denoting the depth of $R_k$ as $D:=\lfloor \log_2 (k+1)\rfloor$, then the coordinate used at $R_k$ is $p=D\mod d+1$.
To form the children of $R_k$, we restrict attention to the observations lying in $R_k$ and cut along coordinate $p$ at their empirical median. This produces two subregions $R_{2k+1}$ and $R_{2k+2}$, each containing about half of those observations (the points on the split boundaries are not themselves included in either child region). The recursion continues in this manner until a region contains fewer than $4\log n$ observations, at which point it is no longer subdivided.

\subsection{Constructing confidence intervals for the regions in the $k$-d tree}
Let $F$ denote the distribution of the $X_i$. The Beta-tree assigns simultaneous confidence intervals for the probability content of each random rectangle $R_k$ of the partition, relying on the crucial fact that $F(R_k)$ follows a known Beta distribution, see Proposition $1$ of \textcite{walther2023beta}. Strictly speaking, because $R_k$ is itself random, these intervals are prediction intervals for the random quantity $F(R_k)$, but we follow \textcite{walther2023beta} in referring to them as confidence intervals. An exact finite sample $1-\alpha$ level confidence interval for $F(R_k)$ is given by:

\begin{equation}\label{eq:conf_int}
C_k(\alpha) :=
\Bigl(
\qbeta{n_k+1}{\,n-n_k}{\alpha/2},\;
\qbeta{n_k+1}{\,n-n_k}{1-\alpha/2}
\Bigr).
\end{equation}

where $\qbeta{a}{\,b}{t}$ denotes the $t$-quantile of the Beta distribution with shape parameters $a$ and $b$, and $n_k$ is the (deterministic) number of observations in $R_k$.

In order to make those confidence intervals simultaneously valid for all rectangles in the Beta-tree, we apply the weighted Bonferroni correction suggested in \textcite{walther2023beta}: for each of the $N_D$ rectangles at tree depth $D\geq 1$, we construct a confidence interval for the probability content of such a rectangle using significance level
\begin{equation}\label{eq:bonferroni}
    \alpha_D=\frac{\alpha}{N_D(D_{\max}-D+2)\sum_{B=2}^{D_{\max}+1}\frac{1}{B}}
\end{equation}

where $\alpha$ denotes the global significance level we want to achieve for the whole $k$-d tree and $D_{\max}$ denotes the maximum depth of the Beta-tree. 

Thus, using $\alpha_D$ in (\ref{eq:conf_int}) for every rectangle $R_k$ at depth $D$ gives simultaneous coverage $1-\alpha$ for the $F(R_k)$. 

Note one difference in the definition of $\alpha_D$ compared to \textcite{walther2023beta}: since we do not restrict attention to bounded rectangles, we consider in \eqref{eq:bonferroni} the number $N_D$ of all rectangles at depth $D$ in the constructed tree (clearly $N_D\leq 2^D$, but equality need not hold since the rectangles at the same depth may not contain exactly the same number of observations and therefore splitting may terminate at slightly different depths).

\subsection{The goodness-of-fit test}\label{sec:gof}
Having introduced the Beta-tree along with the confidence intervals for the probability mass in each of its regions, we now describe how these components are combined to form our goodness-of-fit test. In contrast to the usual construction of the Beta-tree histogram, we do not apply the pruning step that merges regions whose data appear uniform. For the purpose of our test, we retain the full collection of nodes in the Beta-tree, which provides a finer partition and enables detection of a wider range of departures from the null model.

We begin by evaluating the probability assigned to each region under the null distribution, assuming for now that we are in a setting where these probabilities can be computed exactly. Then the GOF test checks for each rectangle whether the null probability content falls in the corresponding confidence interval. This will define a \textit{score} of the GOF test. In more detail, given a dataset $\mathcal{X}$ and a corresponding Beta-tree defining $N$ rectangles $\{R_k\}_{k=1}^N$, each associated with a confidence interval $C_k:=C_k(\alpha_k)$ of level $\alpha_k$ (as defined in \eqref{eq:bonferroni}) for its probability content, the \textit{score} of the GOF test is given by
$$\phi(\mathcal{X}):=\frac{1}{N}\sum_{k=1}^N \mathbbm{1}_{\{F_0(R_k)\in C_k\}}$$
 where $F_0(R_k)$ denotes the probability content of $R_k$ under the null.
Now the outcome of our GOF test is as follows:
\begin{itemize}
    \item If the score equals $1$, we do not reject the null hypothesis: in this case, the null distribution is consistent with the observed data.
    \item If the score is strictly less than $1$, at least one region $R_k$ is significant, i.e. $F_0(R_k)$ is not contained in the confidence interval $C_k$. We therefore reject the null hypothesis and report the rectangles that are significant, as detailed below.
\end{itemize}
An important advantage of our test is that it can pinpoint the specific regions of the data where the null distribution fails to fit. Thus our GOF test not only allows rejection of the null hypothesis with a finite sample guaranteed significance level $\alpha$, but it also provides a \textit{localization} of the areas in which the observed data deviate from the null distribution: due to the construction of our procedure, we can claim with finite sample confidence level $1-\alpha$ that the hypothesized null distribution is violated on every rectangle that is flagged as significant. That is, this claim holds with simultaneous confidence $1-\alpha$ across the flagged rectangles.

Instead of reporting all significant rectangles, our procedure reports only the minimal (with respect to inclusion) rectangles among the significant rectangles, i.e. the significant rectangles which do not strictly contain another significant rectangle. This is motivated by the fact that a local violation of the null hypothesis may not only result in a corresponding small rectangle $R_k$ to be significant, but it may also result in rectangles that are supersets of $R_k$ to be flagged as significant. In this case one would only be interested in the smaller $R_k$ since a smaller rectangle allows for a better localization.

\section{Application to Gaussian mixture model selection}\label{sec:mixture}

A natural application of our GOF test is in selecting the number of components in a mixture model. In many practical situations, data are believed to arise from a mixture of distributions, but the true number of components is unknown. Our test can be used to evaluate how well candidate mixture models fit the observed data, providing a principled way to choose the appropriate number of components. We focus our application on Gaussian mixture models.

Suppose the data arise from a mixture of $k$ Gaussian mixture components, where $k$ is unknown and must be selected. A natural approach is to consider a range of candidate values $j$ and, for each $j$, fit a $j$-component mixture using the $k$-means algorithm. This produces estimates of the mixture weights, component means, and covariance matrices, which we can use to define the corresponding null distribution. Given these parameters, the probability content of any region of the Beta-tree under the $j$-component mixture can be computed explicitly. This allows us to evaluate the null mixture distribution on every region of the tree and to apply our goodness-of-fit test for each candidate $j$. By comparing the resulting scores across values of $j$, we can identify the number of mixture components that is most consistent with the observed data. 

More precisely, we select the number of mixture components $j^{*}$ as the smallest candidate whose corresponding null distribution achieves a GOF score of $1$. If no candidate achieves score $1$, we instead select the candidate whose null distribution attains the highest GOF score.
This flexibility accounts for the inherent variability introduced by parameter estimation via the $k$-means algorithm, which may prevent the null distribution from perfectly matching the observed data. By prioritizing the model with the highest score rather than requiring a perfect score, our procedure balances goodness-of-fit with the uncertainty in parameter estimation. The pseudocode of this procedure is given in Algorithm \ref{alg:gof-mixture}.

\begin{algorithm}[t]
\caption{Beta-tree--based GOF test for mixture model selection\label{alg:gof-mixture}}
\begin{algorithmic}[1]
\INPUT 
    An increasing sequence $J=\{j_1,\dots,j_\ell\}$ of candidate numbers of mixture components

\OUTPUT The selected number of mixture components $j^{*}$

\FOR{$j \in J$}
    \STATE Apply the $k$-means algorithm to partition the data into $j$ clusters and estimate mixture parameters 
    $\{(\hat{w}_i,\hat{\mu}_i,\hat{\Sigma}_i)\}_{i=1}^j$
    \STATE Define the null distribution
    \[
    F_{0,j}(x) := \sum_{i=1}^{j} \hat{w}_i \, F(x;\hat{\mu}_i,\hat{\Sigma}_i),
    \]
    where $F(\,\cdot\,;\mu,\Sigma)$ denotes the distribution function of $\mathcal{N}(\mu,\Sigma)$.
    \STATE Apply the Beta-tree GOF test described in Section~2.3 using $F_{0,j}$ as the null distribution and record the resulting score $s_j$
    \IF{$s_j = 1$}
        \RETURN $j^{*} := j$
    \ENDIF
\ENDFOR
\RETURN $j^{*} := \arg\max_{j \in J} s_j$
\end{algorithmic}
\end{algorithm}

\section{GOF Testing with Monte Carlo Simulation of Null Probabilities}
\label{sec:GOF-MC}

In some settings, the null probabilities associated with the regions of the Beta-tree are not available in closed form and must be approximated. A natural approach is to use Monte Carlo simulation: one draws a large independent sample from the null distribution and approximates each region’s null probability by the empirical proportion of simulated points falling into that region. This procedure yields unbiased estimators of the null probabilities. In the following we describe a version of the GOF test which accounts for the uncertainty arising from such a Monte Carlo sampling. 

 In order to approximate the hypothesized null distribution, we simulate a large sample  of size $m$ (with $m= Cn$ and $C\gg 1$), assuming we have a sampler to do this. The number of MC observations that fall in a fixed rectangle $R$ follows a binomial distribution with parameters $m$ and $F_0(R)$. We use the Clopper-Pearson confidence interval for $F_0(R)$, see \textcite{clopper1934use} and \textcite{Thulin_2014}. This interval, which uses quantiles of the Beta distribution, is the classical “exact” binomial confidence interval obtained by inverting the corresponding binomial test and it guarantees at least the nominal coverage for finite $m$. A non-empty intersection between this interval and the corresponding confidence interval given by the Beta-tree
indicates that the data are consistent with the hypothesized null distribution on that rectangle. More formally, the score is now  defined by 
$$\phi(\mathcal{X}):=\frac{1}{N}\sum_{k=1}^N \mathbbm{1}_{\{D_k \cap C_k\neq \emptyset\}}$$
where $C_k(\alpha_k)$ and $D_k(\tilde{\alpha}_k)$ denote the Beta-tree and Clopper-Pearson confidence intervals of levels $1-\alpha_k$ and $1-\tilde{\alpha}_k$ for $F_0(R_k)$, respectively. In practice, we will set $\alpha_k=\tilde{\alpha}_k$. We can make the following statement about the type-I error of our test:

\begin{proposition}
Suppose that for each $k=1,\dots,N$, $\alpha_k=\tilde{\alpha}_k$ and $\sum_{k=1}^N\alpha_k=\frac{\alpha}{2}$. Under the null hypothesis, conditional on the Beta-tree regions $\{R_k\}_{k=1}^N$:
\[
\mathbb{P}_0\!\left(\phi(\mathcal X)<1 \,\middle|\, \{R_k\}_{k=1}^N\right)
\le \alpha
\]
And it follows that
\[
\mathbb{P}_0\left(\phi(\mathcal X)<1\right)\le \alpha
\]
\end{proposition}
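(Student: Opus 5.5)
The plan is a union bound over rectangles, combined with the observation that an empty intersection $D_k\cap C_k=\emptyset$ forces $F_0(R_k)$ to lie outside at least one of the two intervals. Concretely, if $F_0(R_k)\in C_k$ and $F_0(R_k)\in D_k$, then $F_0(R_k)\in D_k\cap C_k$, which is therefore nonempty. Contrapositively,
\[
\{D_k\cap C_k=\emptyset\}\subseteq\{F_0(R_k)\notin C_k\}\cup\{F_0(R_k)\notin D_k\}.
\]
Since $\phi(\mathcal X)<1$ holds exactly when some rectangle has empty intersection, we get
\[
\{\phi(\mathcal X)<1\}\subseteq\bigcup_{k=1}^N\Bigl(\{F_0(R_k)\notin C_k\}\cup\{F_0(R_k)\notin D_k\}\Bigr).
\]

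Next I fix the Beta-tree regions $\{R_k\}_{k=1}^N$ and bound each event conditionally. I take the two intervals in turn.

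\emph{The Clopper--Pearson interval.} The Monte Carlo sample of size $m$ is drawn independently of $\mathcal X$. So, given the regions, the count of simulated points in $R_k$ is exactly $\mathrm{Binomial}(m,F_0(R_k))$ with $R_k$ now a fixed rectangle. The finite-sample coverage guarantee of the Clopper--Pearson interval then gives
\[
\mathbb P_0\bigl(F_0(R_k)\notin D_k \,\big|\, \{R_k\}\bigr)\le\tilde\alpha_k=\alpha_k.
\]

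\emph{The Beta-tree interval.} Under $\mathcal H_0$ the data distribution is $F=F_0$, so $F_0(R_k)=F(R_k)$. I invoke the exact coverage of $C_k(\alpha_k)$ from \eqref{eq:conf_int}, which rests on Proposition 1 of \textcite{walther2023beta}. That result says $F(R_k)$ has the known $\mathrm{Beta}(n_k+1,n-n_k)$ law, with $n_k$ deterministic, i.e.\ fixed by the tree structure. This yields the bound $\alpha_k$ on $\mathbb P_0(F_0(R_k)\notin C_k\mid\{R_k\})$, in the sense in which the paper states the Beta-tree guarantee for a given tree.

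Summing over $k$ gives
\[
\mathbb P_0\bigl(\phi(\mathcal X)<1\mid\{R_k\}\bigr)\le\sum_{k}(\alpha_k+\alpha_k)=2\cdot\tfrac{\alpha}{2}=\alpha.
\]
Taking expectation over the distribution of the regions, by the tower property, yields the unconditional bound $\mathbb P_0(\phi(\mathcal X)<1)\le\alpha$.

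The main obstacle is the Beta-tree term in the conditional statement. Once the regions are fixed, $F_0(R_k)$ is a number, while $C_k$ depends only on the deterministic $n_k$. The conditional miss probability is therefore not a genuine probability over fresh randomness, and its justification has to come from the Beta-tree guarantee itself. I would make this step precise as follows. First, I would state explicitly that the Beta-tree coverage is being used exactly as \eqref{eq:conf_int} provides it, as a prediction interval for the random $F(R_k)$ whose distribution is fixed by the tree's counts. Second, I would note that the Monte Carlo randomness is the only additional source of error, and that it is genuinely independent of the data and of the regions. If a fully rigorous conditional reading is needed, the fallback is this: bound the Monte Carlo part conditionally; keep the Beta-tree failure event $\bigcup_k\{F_0(R_k)\notin C_k\}$, whose probability is at most $\sum_k\alpha_k=\alpha/2$ by Proposition 1 of \textcite{walther2023beta} and the union bound; and combine the two bounds.
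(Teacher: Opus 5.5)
Your main line is the same as the paper's proof. The paper also writes $\{\phi(\mathcal X)<1\}=\bigcup_k\{C_k\cap D_k=\emptyset\}$, bounds each term by $\mathbb P_0(F_0(R_k)\notin C_k\mid\{R_j\})+\mathbb P_0(F_0(R_k)\notin D_k\mid\{R_j\})\le\alpha_k+\tilde\alpha_k$, sums to $\alpha$, and then invokes the tower property. The step you singled out as the main obstacle is a genuine gap, and the paper's proof has it too: it simply asserts that $C_k$ is a confidence interval for $F_0(R_k)$ conditional on the regions.

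Given $\{R_j\}_{j=1}^N$, both $F_0(R_k)$ and $C_k(\alpha_k)$ are fixed; the latter depends only on $n$, $n_k$ and $\alpha_k$. Hence $\mathbb P_0(F_0(R_k)\notin C_k\mid\{R_j\})=\mathbbm{1}_{\{F_0(R_k)\notin C_k\}}$, which equals $1$ on an event of probability exactly $\alpha_k$. Appealing to the Beta-tree guarantee ``in the sense in which the paper states it'' does not help, because that guarantee concerns the marginal $\mathrm{Beta}(n_k+1,n-n_k)$ law of $F(R_k)$.

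In fact the conditional claim is false, not merely unproved. Consider the positive-probability event that $F_0(R_k)$ lies well outside $C_k$. There the interval $D_k$, computed from $m\gg n$ independent draws, concentrates around $F_0(R_k)$ and misses $C_k$ with conditional probability close to $1$. So $\mathbb P_0(\phi(\mathcal X)<1\mid\{R_j\})$ is close to $1>\alpha$ on that event. For the same reason, deriving the unconditional bound from the conditional one via the tower property is not valid: it starts from a false premise.

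Your fallback is the correct proof, but only of the second display:
\[
\mathbb P_0(\phi(\mathcal X)<1)\le\mathbb P_0\Bigl(\bigcup_{k=1}^N\{F_0(R_k)\notin C_k\}\Bigr)+\mathbb E_0\Bigl[\mathbb P_0\Bigl(\bigcup_{k=1}^N\{F_0(R_k)\notin D_k\}\,\Big|\,\{R_j\}_{j=1}^N\Bigr)\Bigr]\le\sum_{k=1}^N\alpha_k+\sum_{k=1}^N\tilde\alpha_k=\alpha.
\]
The first term uses that $F_0(R_k)=F(R_k)$ has the $\mathrm{Beta}(n_k+1,n-n_k)$ law under $\mathcal H_0$, plus a union bound. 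The second term uses that the Monte Carlo sample is independent of $\mathcal X$, together with Clopper--Pearson coverage for fixed rectangles.

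The strongest correct conditional statement is
\[
\mathbb P_0\bigl(\phi(\mathcal X)<1\,\big|\,\{R_j\}_{j=1}^N\bigr)\le\mathbbm{1}_{\{\exists k:\,F_0(R_k)\notin C_k\}}+\frac{\alpha}{2},
\]
whose expectation is at most $\alpha$. So make the fallback the proof. Replace the conditional display of the proposition by this weaker form rather than trying to make the original one rigorous.
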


\begin{proof}
Note that $\phi(\mathcal X)<1$ if and only if there exists $k\in\{1,\dots,N\}$ such that
$C_k(\alpha_k)\cap D_k(\tilde\alpha_k)=\emptyset$. Hence,
\[
\{\phi(\mathcal X)<1\}
= \bigcup_{k=1}^N \{C_k(\alpha_k)\cap D_k(\tilde\alpha_k)=\emptyset\}.
\]
Fix $k$. Under $H_0$, both $C_k(\alpha_k)$ and $D_k(\tilde\alpha_k)$ are confidence intervals
for the same quantity $F_0(R_k)$ (conditional on $\{R_k\}_{k=1}^N$). Therefore,
\begin{align*}
\mathbb{P}_0\!\left(C_k(\alpha_k)\cap D_k(\tilde\alpha_k)=\emptyset \,\middle|\, \{R_j\}_{j=1}^N\right)
&\le 
\mathbb{P}_0\left(F_0(R_k)\notin C_k(\alpha_k)\,\middle|\, \{R_j\}_{j=1}^N\right)
+
\mathbb{P}_0\left(F_0(R_k)\notin D_k(\tilde\alpha_k)\,\middle|\, \{R_j\}_{j=1}^N\right) \\
&\le \alpha_k+\tilde\alpha_k.
\end{align*}
Applying a union bound yields
$$
\mathbb{P}_0\left(\phi(\mathcal X)<1 \,\middle|\, \{R_k\}_{k=1}^N\right)
\le \sum_{k=1}^N
\mathbb{P}_0\left(C_k(\alpha_k)\cap D_k(\tilde\alpha_k)=\emptyset \,\middle|\, \{R_j\}_{j=1}^N\right)
\le \sum_{k=1}^N(\alpha_k+\tilde\alpha_k)=\frac{\alpha}{2}+\frac{\alpha}{2}=\alpha
$$
And by the tower property, it also follows that
$\mathbb{P}_0\!\left(\phi(\mathcal X)<1\right)\le \alpha$.
\end{proof}
\begin{remark}
Another possible GOF procedure, when the null region probabilities are not available in closed form, is to replace $F_0(R)$ by its Monte Carlo estimate $\hat{F}_0(R)$ and then apply the standard test from Section \ref{sec:gof}, i.e., check whether $\hat{F}_0(R)$ falls within the Beta-tree confidence intervals. In simulations, this plug-in approach tended to yield higher power, but it comes at the cost of losing theoretical type-I error control, unlike the test proposed above.
\end{remark}

\section{Simulation results and applications to real data}\label{sec:examples}

\subsection{A toy example}

We illustrate our GOF test with a simple toy example. We sample $1500$ data points  from the bivariate distribution $\mathcal{N}\left(\begin{pmatrix}
    5\\
    5
\end{pmatrix},\begin{pmatrix}
    1 &0.5\\
    0.5&1
\end{pmatrix}\right)$. 

The Beta-tree for one such sample is shown in Figure \ref{fig:beta_tree_plot}. The shaded rectangles are the leaf cells of the  Beta-tree as defined in \textcite{walther2023beta}. The GOF test, however, considers all rectangles created during the recursive partitioning, including internal (parent) rectangles, shown by the split lines.

\begin{figure}
    \centering
    \includegraphics[width=0.7\linewidth]{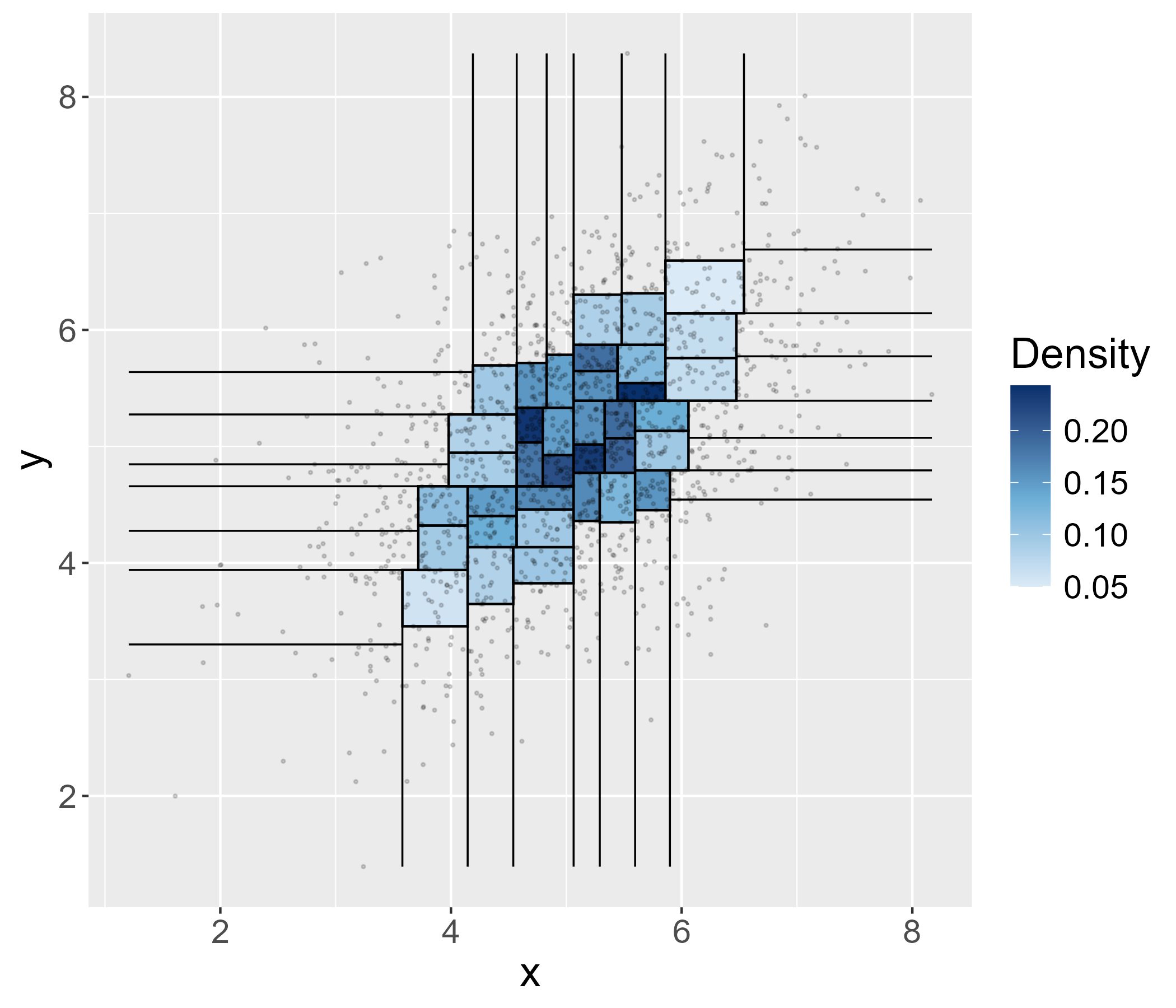}
    \caption{Beta-tree histogram for $1500$ samples from the bivariate normal $\mathcal{N}\left(\begin{pmatrix}
    5\\
    5
\end{pmatrix},\begin{pmatrix}
    1 &0.5\\
    0.5&1
\end{pmatrix}\right)$. }
    \label{fig:beta_tree_plot}
\end{figure}

If we did not know the true distribution and wanted to test whether our data were generated from a shifted bivariate normal distribution $\mathcal{N}\left(\begin{pmatrix}
    5\\
    5.15
\end{pmatrix},\begin{pmatrix}
    1 &0.5\\
    0.5&1
\end{pmatrix}\right)$, we would first construct a Beta-tree as described in section \ref{sec:building}, and then run the GOF test presented in section \ref{sec:gof}, taking this shifted distribution as the null. 

In order to estimate the power of our test, we draw $1000$ samples of size $1500$ from the true distribution and compute the proportion of replicates for which the test detects a deviation from the putative null hypothesis (the shifted distribution).  We obtain a power of $0.94$. 
We show in Figure~\ref{fig:beta-tree-flagged} the minimal (with respect to set inclusion) rectangles flagged by our test. The boundaries of these two rectangles are marked in red. In each flagged rectangle $R$, we are $1-\alpha$ confident (in the simultaneous sense) that $F_0(R)$ is incompatible with the data. This illustrates that the proposed GOF test not only detects departures from the null, but also localizes them by identifying the specific regions of the sample space where, with overall confidence level $1-\alpha$, the discrepancies occur.

We also estimate the type I error rate using $1000$ replicates, running the GOF test with the true distribution as the null. The resulting type I error rate we get is $0.07$. The Beta-tree was constructed with $\alpha=0.1$.

\begin{figure}
    \centering
    \includegraphics[width=0.7\linewidth]{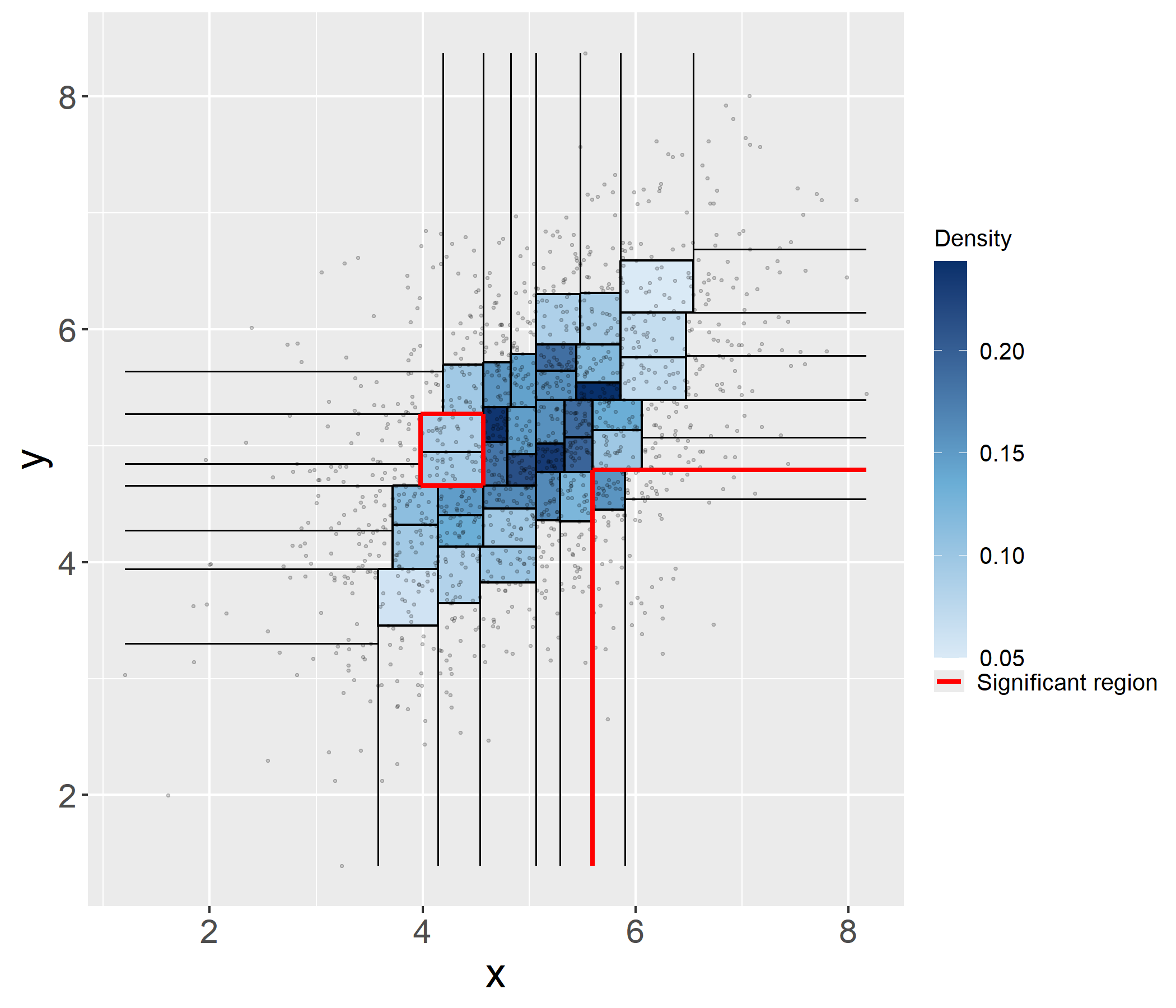}
    \caption{Beta-tree histogram for $1500$ samples from the bivariate normal $\mathcal{N}\left(\begin{pmatrix}
    5\\
    5
\end{pmatrix},\begin{pmatrix}
    1 &0.5\\
    0.5&1
\end{pmatrix}\right)$ with regions flagged by the GOF test when the putative null distribution is  $\mathcal{N}\left(\begin{pmatrix}
    5\\
    5.15
\end{pmatrix},\begin{pmatrix}
    1 &0.5\\
    0.5&1
\end{pmatrix}\right)$  }
    \label{fig:beta-tree-flagged}
\end{figure}

\subsection{Mixture model selection example}
We now present one example where our GOF test is applied to Gaussian mixture model selection, as described in section~\ref{sec:mixture}. First, we consider an `easy' setting in which we generate a sample of size $5000$ from a three-dimensional Gaussian mixture model ($P_{\textrm{well-sep}}$) in which the components are well-separated. The second setting we consider has $10000$ samples from a two-dimensional Gaussian mixture ($P_{\textrm{overlap}}$) for which the mixture components exhibit substantial overlap. We use the following parameters:
\begin{align*}
    P_{\textrm{well-sep}}&=0.25\hspace{0.2 em}\mathcal{N}\left(\begin{pmatrix}
    -1.5\\
    0.6\\
    1
\end{pmatrix},\begin{pmatrix}
    1 &0.5&0.5\\
    0.5&1&0.5\\
    0.5&0.5&1\\
\end{pmatrix}\right)+0.5\hspace{0.2 em}\mathcal{N}\left(\begin{pmatrix}
    2\\
    -1.5\\
    0
\end{pmatrix},\begin{pmatrix}
    1 &0&0\\
    0&1&0\\
    0&0&1\\
\end{pmatrix}\right)\\&+0.25\hspace{0.2 em}
\mathcal{N}\left(\begin{pmatrix}
    -2.6\\
    -3\\
    -2
\end{pmatrix},\begin{pmatrix}
    1 &-0.4&0.6\\
    -0.4&1&0\\
    0.6&0&1\\
\end{pmatrix}\right)
\end{align*}
$$ P_{\textrm{overlap}}=0.4 \hspace{0.2em}\mathcal{N}\left(\begin{pmatrix}
    0\\
    0
\end{pmatrix},\begin{pmatrix}
    1.4 &0.8\\
    0.8&1.2
\end{pmatrix}\right)+0.35\hspace{0.2 em}\mathcal{N}\left(\begin{pmatrix}
    2.2\\
    1.8
\end{pmatrix},\begin{pmatrix}
    1.2 &-0.7\\
    -0.7&1.3\\
\end{pmatrix}\right)+0.25 \hspace{0.2 em}
\mathcal{N}\left(\begin{pmatrix}
    3.2\\
    -0.8
\end{pmatrix},\begin{pmatrix}
    0.9 &0.4\\
    0.4&1.1
\end{pmatrix}\right)$$

We determine the `best' $k$, returned by our GOF procedure described in Algorithm~\ref{alg:gof-mixture}, which is the number $k$ that achieves the maximal score (not necessarily equal to $1$) across all the values of $k$ considered. Running our GOF test over $1000$ different seeds, we find that in the well-separated and overlap cases, our test recovers the true number of mixture components $97.2\%$ and $98.8\%$ of the time, respectively. For some replicates in both settings, the maximal GOF score is strictly less than 1, yet it is still attained at the true number of mixture components in most cases.

We show in Figure \ref{fig:overlap} some data sampled from $P_{\textrm{overlap}}$ with the corresponding Beta-tree histogram along with flagged rectangles, for $k=3$ and $k=2$. 
For $k=3$, the few significant rectangles are concentrated primarily in the regions where the mixture components overlap. This suggests that the fitted three-component model captures the global cluster structure reasonably well, and that the residual discrepancies are confined to the boundary regions where the components are hardest to distinguish. For $k=2$, we observe many flagged rectangles spread across the space, showing clear underfitting.

\begin{figure}
    \centering

    \begin{subfigure}{0.49\textwidth}
        \centering
        \includegraphics[width=\linewidth]{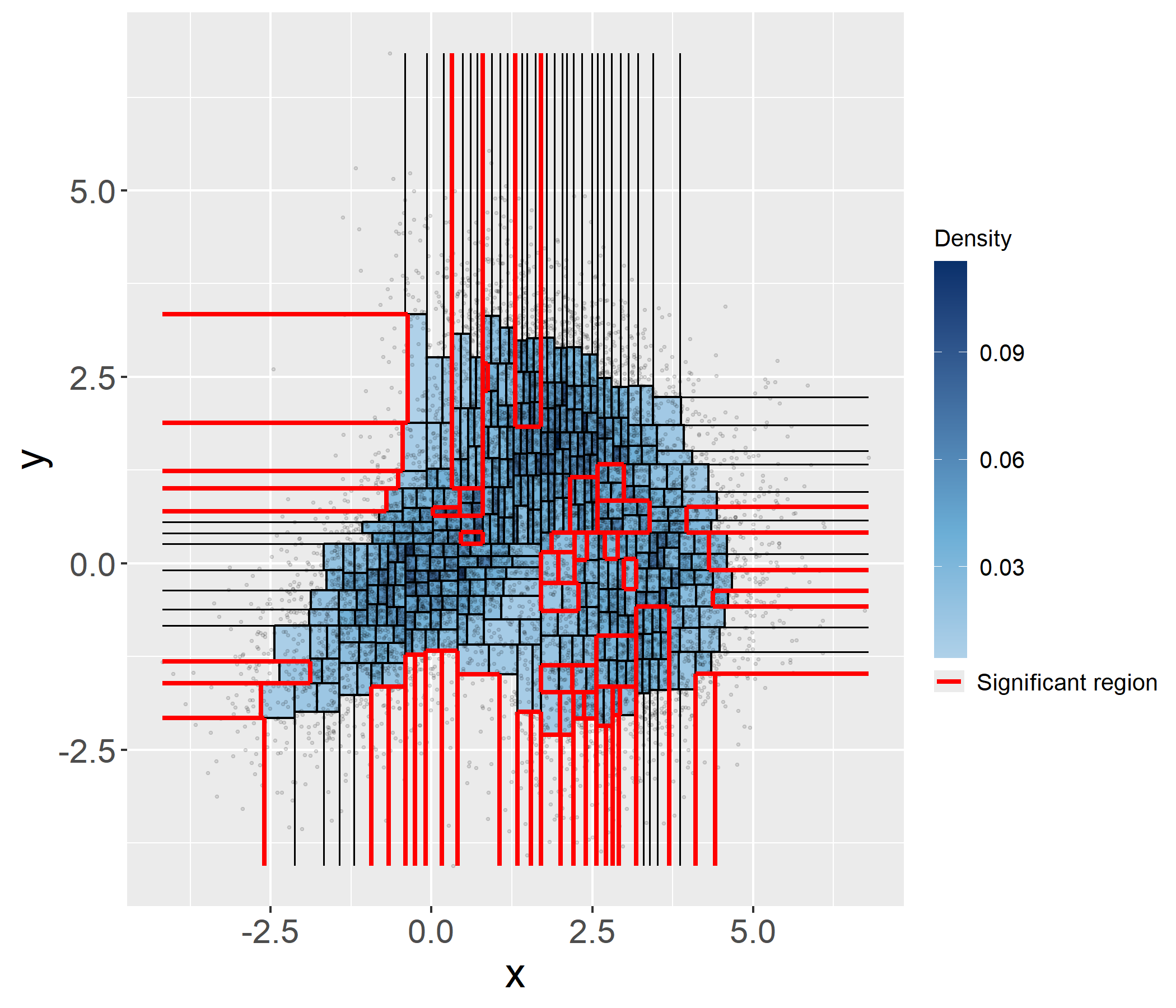}
        \caption{$k=2$}
    \end{subfigure}
    \hfill
    \begin{subfigure}{0.49\textwidth}
        \centering
        \includegraphics[width=\linewidth]{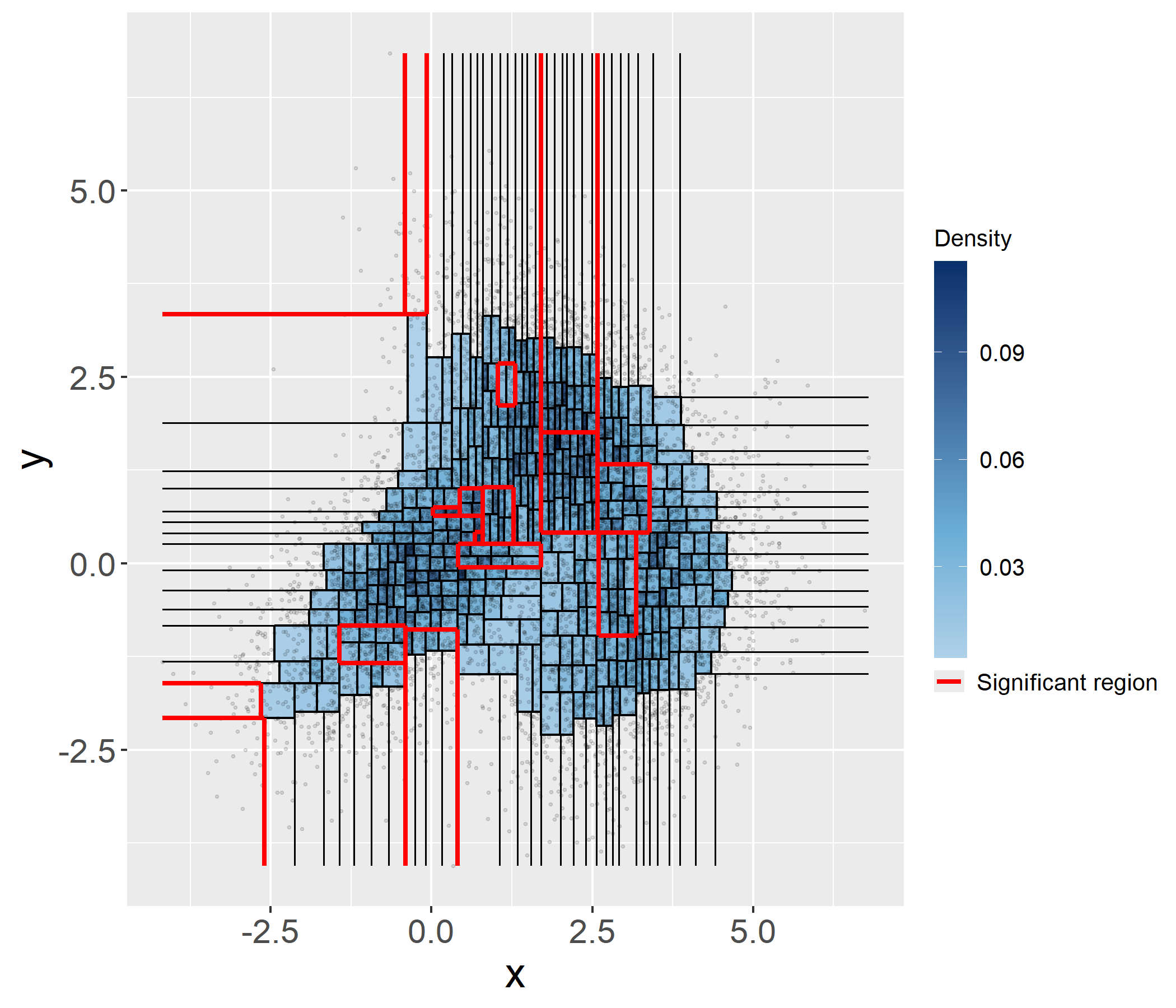}
        \caption{$k=3$}
    \end{subfigure}

    \caption{Beta-tree histogram for the $3$-component mixture of bivariate normals with substantial overlap, with flagged regions shown when running the Beta-tree GOF test procedure in combination with the $k$-means algorithm, with $k=2$ and $k=3$. Even for the correct number of components $k=3$, some regions are flagged due to imperfect parameter estimation from $k$-means, especially in the overlapping regions.}
    \label{fig:overlap}
\end{figure}

\subsection{When $F_0(R_k)$ is intractable: Monte Carlo-based GOF}
In this example, we consider the case where $F_0(R_k)$ is intractable and Monte Carlo simulation appears as a natural way to approximate the probability content under the null distribution, as described in Section \ref{sec:GOF-MC}.
We sample bivariate Beta data using a Gaussian copula with observation-specific random correlation. More precisely, samples are obtained through the following procedure:
\begin{enumerate}
    \item Sample $\eta\sim \mathcal{N}(\mu,\sigma^2)$. Set $\rho=\tanh(\eta)$
    \item Sample $(z_1,z_2)\sim \mathcal{N}\left(\begin{pmatrix}
    0\\
    0
\end{pmatrix},\begin{pmatrix}
    1 &\rho\\
    \rho&1
\end{pmatrix}\right)$
    \item Transform $(z_1,z_2)$ to be uniformly distributed: $u_1=\Phi(z_1)$, $u_2=\Phi(z_2)$ where $\Phi$ denotes the cdf of $\mathcal{N}(0,1)$.
    \item Transform to Beta marginals: $(x_1,x_2)=(F^{-1}_{(\alpha_1,\beta_1)}(u_1),F^{-1}_{(\alpha_2,\beta_2)}(u_2))$ where $F_{(\alpha,\beta)}$ denotes the cdf of the Beta distribution with shape parameters $\alpha$, $\beta$.
\end{enumerate}
Then, for any rectangle $R$, we have that $F_0(R)=\mathbb{E}_{\rho}[F_0(R|\rho)]$, which does not have a simple closed form expression. 

Thus, as described in section \ref{sec:GOF-MC}, we use Monte Carlo samples to form a Clopper-Pearson confidence interval, and use the GOF test based on the intersection between that interval and the confidence interval given by the Beta-tree.

In our simulations, we sample our data using $(\mu,\sigma,\alpha_1,\beta_1,\alpha_2,\beta_2)=(0,0.8,5,10,1,3)$. Suppose that we have imperfect knowledge of that model, and want to test if the data was generated using $(\mu,\sigma,\alpha_1,\beta_1,\alpha_2,\beta_2)=(0,0.8,5,10,1,3.8)$ instead. Using $1000$ independent datasets of size $n=2000$, we run the proposed MC-GOF test with Monte Carlo sample size $m=10000$ to estimate power and type-I error. For this parameter mismatch, the test rejects the null with empirical power $0.94$, while the empirical type-I error under the true null is $0.001$.

We note that this procedure is quite conservative, which is expected since we control errors simultaneously across many regions and use Clopper–Pearson Monte Carlo intervals, both of which tend to widen the acceptance region under the null in finite samples.

\subsection{Real data example: Insurance data} 

We run our Beta-tree GOF test on the \verb|loss| insurance data from the R \verb|copula| package, and focus on the dependence relationship between indemnity payment (\verb|loss|, which we denote as $X_1$) and allocated loss adjustment expense (\verb|alae|, which we denote as $X_2$). Following \textcite{genest2006goodness}, we restrict attention to the $n=1466$ uncensored observations and work with the associated pseudo-observations $U_{ij}=\frac{R_{ij}}{n+1}$ where $R_{ij}$ is the rank of $X_{ij}$ among $X_{1j},\dots, X_{nj}$, for $j\in\{1,2\}$. These data were used by various authors (e.g. \textcite{frees1998understanding}, \textcite{klugman1999fitting}) to perform copula-based model selection. 

In the bivariate setting, we recall that a copula is a distribution function \(C:[0,1]^2\to[0,1]\) with uniform margins, used to describe the dependence structure between two continuous random variables $X_1$ and $X_2$ separately from their marginal distributions. By Sklar's theorem, if $F$ and $G$ denote the marginal distribution functions of $X_1$ and $X_2$, then their joint distribution function can be written as
\[
H(x_1,x_2)=C(F(x_1),G(x_2)), \qquad x_1,x_2\in\mathbb{R}.
\]
A widely used parametric subclass is the family of Archimedean copulas, which in the bivariate case take the form
\[
C_\theta(u,v)=\phi_\theta^{-1}\bigl(\phi_\theta(u)+\phi_\theta(v)\bigr), \qquad (u,v)\in[0,1]^2,
\]
where \(\phi_\theta\) is a suitable generator depending on a parameter $\theta$. 

As in \textcite{genest2006goodness}, we consider three bivariate Archimedean copula families, namely the Clayton, Frank, and Gumbel--Hougaard copulas, whose bivariate distribution functions are given by
\[
C_\theta^{\mathrm{Clayton}}(u,v)
=
\left(u^{-\theta}+v^{-\theta}-1\right)^{-1/\theta},
\qquad \theta>0,
\]

\[
C_\theta^{\mathrm{Frank}}(u,v)
=
-\frac{1}{\theta}
\log\left(
1+\frac{(e^{-\theta u}-1)(e^{-\theta v}-1)}{e^{-\theta}-1}
\right),
\qquad \theta\in\mathbb{R}\setminus\{0\},
\]

\[
C_\theta^{\mathrm{Gumbel}}(u,v)
=
\exp\left(
-\left((-\log u)^\theta+(-\log v)^\theta\right)^{1/\theta}
\right),
\qquad \theta\ge 1.
\]
These families represent different forms of dependence: the Clayton copula exhibits lower-tail dependence, the Gumbel--Hougaard copula exhibits upper-tail dependence, and the Frank copula is symmetric and has no tail dependence.

The goodness-of-fit procedures of \textcite{genest2006goodness} are based on Kendall's process, namely $\sqrt{n}(K_n-K(\theta_n,\cdot))$, where $K_n$ denotes the empirical distribution function of $H(X,Y)$ and its parametric counterpart $K(\theta_n,\cdot)$ under the fitted copula model. From this process, they consider two test statistics,
\[
S_n=\int_0^1 \mathbb{K}_n(t)^2\, k(\theta_n,t)\,dt
\qquad\text{and}\qquad
T_n=\sup_{0\le t\le 1} |\mathbb{K}_n(t)|,
\]
where \(\mathbb{K}_n(t)=\sqrt{n}\{K_n(t)-K(\theta_n,t)\}\) and \(k(\theta_n,t)\) is the density of \(K(\theta_n,t)\). Thus $S_n$ is a Cram\'er--von Mises type statistic, while $T_n$ is a Kolmogorov--Smirnov type statistic. They also report \(S_n^{(0)}\), which corresponds to the Wang--Wells (\textcite{wang2000model}) $L^2$-type statistic with truncation parameter equal to $0$, i.e. $S_{n}^{(0)}=\int_{0}^1\mathbb{K}_n(t)^2dt$.

For the LOSS/ALAE data, Genest et al.\ estimate the dependence parameter in each family by inversion of Kendall's tau and compute bootstrap $p$-values for \(S_n\), \(T_n\), and \(S_n^{(0)}\). Their results reject the Clayton and Frank copulas at the \(5\%\) level, whereas the Gumbel--Hougaard copula is not rejected. Indeed, the reported $p$-values for the Gumbel--Hougaard model are all above 80\% (\textcite[Table~7]{genest2006goodness}).

Our Beta-tree GOF analysis leads to the same qualitative conclusion. In this example, the null probability assigned to each Beta-tree rectangle is computed analytically under the fitted copula model, so no Monte Carlo simulation is needed. When applied to the pseudo-observations from the uncensored LOSS/ALAE sample, the only one of the three copula models that passes our GOF test is the Gumbel--Hougaard copula (score $=1$), while the Clayton and Frank copulas are flagged as inconsistent with the data (score $<1$). More importantly, unlike the global statistics \(S_n\), \(T_n\), and \(S_n^{(0)}\), our method identifies the minimal significant rectangles that provide a localized diagnostic of where the fitted copula model appears inconsistent with the data (see Figure~\ref{fig:clayton} for the Clayton copula). Note that since the pseudo-observations are rank-based and hence discrete, the exact finite-sample guarantees established above for continuous data do not directly apply here, although the resulting rectangles still provide an informative localized diagnostic.

\begin{table}[ht]
\centering
\caption{Results of the Beta-tree GOF test performed on the insurance data, with Gumbel, Frank and Clayton copulas, using the values of $\theta_n$ reported in \textcite{genest2006goodness} as estimates of the dependence parameter $\theta$, with $\alpha=0.1$.}
\label{tab:type1-power}
\begin{tabular}{|c|c|c|}
\hline
Copula model & $\theta_n$ & Score \\
\hline
Gumbel--Hougaard & $1.468$ & $1.0$ \\
\hline
Frank & $3.143$ & $0.98$ \\
\hline
Clayton & $0.939$ & $0.92$\\
\hline
\end{tabular}
\end{table}

\begin{figure}
    \centering
    \includegraphics[width=0.7\linewidth]{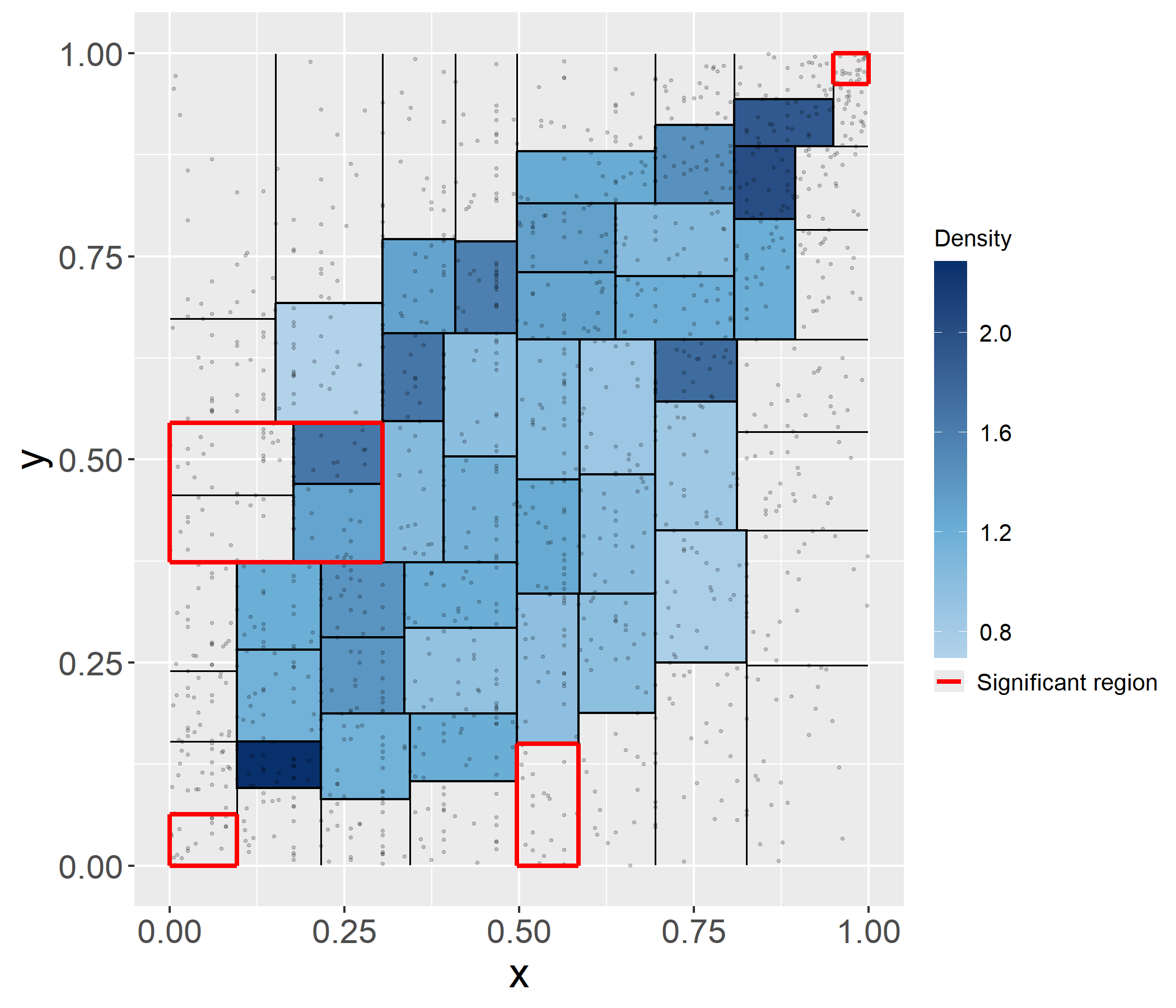}
    \caption{Beta-tree histogram of the pseudo-observations obtained for the \texttt{loss} and \texttt{alae} variables of the \texttt{loss} dataset (shaded regions are the leaves of the Beta-tree), with regions flagged by our GOF test when taking the null distribution to be the Clayton-copula with $\theta_n=0.939$}
    \label{fig:clayton}
\end{figure}

\section{Conclusion}

We have proposed a new goodness-of-fit procedure based on Beta-tree partitions. 
The main idea is to use the data-adaptive rectangles generated by the Beta-tree together with simultaneous confidence intervals for their probability contents. 
By comparing these intervals with the probabilities assigned by a candidate null distribution, the resulting test provides a finite-sample, distribution-free way to assess whether the null model is compatible with the data. 
One advantage of this method is that it not only returns a global decision, but also localizes the regions of the sample space where the null model appears to fail with high probability.

The examples considered in this paper illustrate several aspects of the procedure. 
In simulated settings, the test is able to detect moderate departures from the null while localizing the discrepancies via the minimal significant rectangles. 
In the mixture model example, the Beta-tree score gives a natural diagnostic for comparing fitted Gaussian mixture models with different numbers of components. 
In particular, even when the fitted model does not achieve a perfect score because of parameter estimation error, the score can still identify the correct model complexity. The flagged regions also provide useful information about the nature of the misspecification, for instance by showing whether discrepancies are spread throughout the sample space or concentrated in overlapping cluster regions.

We also considered the case where the null probabilities of the Beta-tree rectangles are not available in closed form. 
For this setting, we introduced a Monte Carlo version of the test based on Clopper--Pearson intervals for the simulated null probabilities. 
This modification preserves finite-sample type-I error control by accounting for both the uncertainty in the data-driven Beta-tree prediction intervals and the uncertainty from Monte Carlo approximation. As expected, this version can however be conservative in finite samples, since it combines simultaneous inference over many regions with exact binomial confidence intervals.

Finally, the insurance data example shows how the proposed method can be used as a copula model assessment. In agreement with previous goodness-of-fit analyses based on Kendall's process, the Gumbel--Hougaard copula is not rejected, while the Clayton and Frank copulas are found to be inconsistent with the data. Beyond this global conclusion, the Beta-tree test identifies the specific regions in the pseudo-observation space responsible for the rejection, thereby giving a more interpretable description of model lack-of-fit.

Several directions remain for future work. First, the current mixture model selection procedure relies on parameters estimated by $k$-means, and it would be useful to develop a version that explicitly accounts for parameter estimation uncertainty. Additionally, the Monte Carlo version of this GOF test could potentially be sharpened by replacing Clopper--Pearson intervals with less conservative simultaneous intervals while retaining rigorous error control. It would also be interesting to study the power of the Beta-tree GOF test against structured local alternatives and to compare its localization properties more systematically with existing global and local goodness-of-fit procedures. Finally, it would be interesting to explore alternative definitions of the GOF score that incorporate not only the number of flagged regions, but also the magnitude and location of the discrepancies between the fitted null distribution and the data.

\printbibliography

\end{document}